\newcommand*{\medcap}{\mathbin{\scalebox{0.8}{\ensuremath{\bigcap}}}}%
\newcommand*{\medcup}{\mathbin{\scalebox{0.8}{\ensuremath{\bigcup}}}}%
\newtheorem{theorem}{Theorem}
\newtheorem{remark}{Remark}
\newtheorem{proposition}{Proposition}
\def\BibTeX{{\rm B\kern-.05em{\sc i\kern-.025em b}\kern-.08em
    T\kern-.1667em\lower.7ex\hbox{E}\kern-.125emX}}
\begin{document}

\title{\LARGE \bf Eco-driving under localization uncertainty for connected vehicles on Urban roads: Data-driven approach and Experiment verification}

\author{Eunhyek Joa \qquad Eric Yongkeun Choi \qquad Francesco Borrelli
\thanks{The authors are with the Model Predictive Control Lab,
Department of Mechanical Engineering, University of California at Berkeley (e-mail: e.joa@berkeley.edu; yk90@berkeley.edu; fborrelli@berkeley.edu). }}

\maketitle

\begin{abstract}
This paper addresses the eco-driving problem for connected vehicles on urban roads, considering localization uncertainty.
Eco-driving is defined as longitudinal speed planning and control on roads with the presence of a sequence of traffic lights.
We solve the problem by using a data-driven model predictive control (MPC) strategy.
This approach involves learning a cost-to-go function and constraints from state-input data.
The cost-to-go function represents the remaining energy-to-spend from the given state, and the constraints ensure that the controlled vehicle passes the upcoming traffic light timely while obeying traffic laws. 
The resulting convex optimization problem has a short horizon and is amenable for real-time implementations.
We demonstrate the effectiveness of our approach through real-world vehicle experiments. 
Our method demonstrates $12\%$ improvement in energy efficiency compared to the traditional approaches, which plan longitudinal speed by solving a long-horizon optimal control problem and track the planned speed using another controller, as evidenced by vehicle experiments.
\end{abstract}


\section{Introduction}
In the emerging era of smart cities, connected and automated vehicle (CAVs) technology offers significant advantages for daily driving and urban transportation \cite{ELLIOTT2019Recent}.
Recognizing its potential, extensive research efforts have been dedicated to advancing CAV technology, which promises to enhance road utilization, vehicle energy efficiency, and traffic safety \cite{guanetti2018CAVs, yao2021Fuel, karbasi2022Investigating}. 
These improvements are largely attributable to reducing human errors and integrating comprehensive traffic information provided by connectivity technologies.

In this paper, we focus on the aspect of vehicle energy efficiency among the benefits of CAV technology, addressing what is commonly referred to as eco-driving problems \cite{sciarretta2015optimal}.
Optimization-based algorithms have become the cornerstone for addressing eco-driving problems, and the primary objective of these algorithms is to minimize stops, along with unnecessary acceleration and deceleration, thereby significantly enhancing energy efficiency \cite{sciarretta2020CAVs}. 
This goal is pursued by developing optimal control problems that incorporate both vehicle and powertrain dynamics, as well as the surrounding traffic agents such as surrounding vehicles and traffic lights.

Recent research has explored a two-level control architecture to address eco-driving challenges, effectively reducing model complexity by splitting the optimization problem into two manageable sub-problems. 
This approach has been employed in various optimization control methods, including Dynamic Programming (DP) \cite{bae2019real, sun2020optimal}, Pontryagin's Minimum Principle (PMP) \cite{ard2023VILCAV, han2023energy}, and Model Predictive Control (MPC) \cite{chada2020ecological}, which all aid in lessening the computational load for easier implementation. 
However, this bifurcation into sub-problems often results in sub-optimal outcomes and issues such as delay and latency. 
To address these drawbacks, there have been initiatives to resolve the problem within a singular control layer using deep reinforcement learning (DRL) \cite{Vindula2022RL, bai2022RL, li2022RL}. 
Despite its promise, the sensitivity of DRL to parameter tuning and safety guarantees poses significant difficulties when applied in practical, real-world applications.

In practice, most eco-driving experiments leverage the two-level control architecture for the reasons mentioned above. 
However, as we approach the real-world application of connected and automated vehicles (CAVs), two primary issues arise. 
Firstly, the two-layer control architecture may introduce additional challenges beyond sub-optimality. 
Delays and latency discrepancies between the two layers can result in spatial and temporal misalignments, leading to conflicting control layer decisions. 
Furthermore, since each layer may prioritize different objectives (such as traffic light distance, speed limits, etc.), this can lead to the different tuning of controllers and thus varying vehicle performance. 
Secondly, the issue of vehicle localization uncertainty poses a challenge. 
Most existing studies presume the availability of precise position information, an assumption that may not hold in real-world scenarios. 
Upcoming CAV applications might have to rely on localization modules with lower accuracy, introducing non-negligible uncertainties, especially in scenarios like eco-approach at signalized intersections. 
Additionally, the computing power of most commercial vehicles is often insufficient for processing complex algorithms in real-time.

To address these challenges, we have developed a data-driven approach tailored to CAV eco-driving scenarios. This approach is designed to be computationally feasible and robust against localization uncertainties, offering a practical solution for the next imminent generation of CAV eco-driving technologies.
Our contributions are summarized as:
\begin{itemize}
    \item  We propose a novel real-time, data-driven MPC to approximately solve an eco-driving problem under localization uncertainty for connected vehicles.
    \item The proposed MPC can ensure the controlled vehicle's timely crossing of traffic lights within a user-defined duration in closed-loop, distinguishing it from previous research \cite{bae2019real, ard2023VILCAV, sun2020optimal}, which could only ensure such crossings in open-loop or during the planning phase.
    \item The proposed MPC is a convex optimization problem that can be solved using an off-the-shelf solver.
    \item We experimentally demonstrate the energy saving of the proposed algorithm through vehicle tests where the actual test vehicle is controlled to finish the given route under localization uncertainty while interacting with virtual, deterministic traffic lights.
\end{itemize}

\textit{Notation:} 
Throughout the paper, we use the following notation.
$\mathbf{0}^{n \times m}$ represents an $n$-by-$m$ zero matrix.
The positive semi-definite matrix \(P\) is denoted as \(P\succeq0\).
The Minkowski sum of two sets is denoted as \(\mathcal{X} \oplus \mathcal{Y} = \{x+y: x \in \mathcal{X}, y \in \mathcal{Y}\}\).
The Pontryagin difference between two sets is defined as \(\mathcal{X} \ominus \mathcal{Y} = \{x\in\mathcal{X}: x+y \in \mathcal{X}, \forall y \in \mathcal{Y}\}\).
The m-th column vector of a matrix \(H\) is denoted as \([H]_m\). The m-th component of a vector \(h\) is \([h]_m\). 
The notation \(x_{l:m}\) means the sequence of the variable \(x\) from time step \(l\) to time step \(m\).

\section{Problem Setup}
In this section, we formulate the problem of eco-driving for connected vehicles on urban roads under localization uncertainty. 
We assume that a route from a starting point A to a goal point B is given.
Moreover, we assume that traffic light cycles are deterministic, and the road grade is small enough to neglect the gravitational potential energy in energy consumption.
\subsection{Route segmentation} \label{sec: urban driving parameterization}
Our route segmentation strategy is illustrated in Fig. \ref{fig:road_segmentation}.
\begin{figure}[ht]
\begin{center}
\includegraphics[width=0.85\linewidth,keepaspectratio]{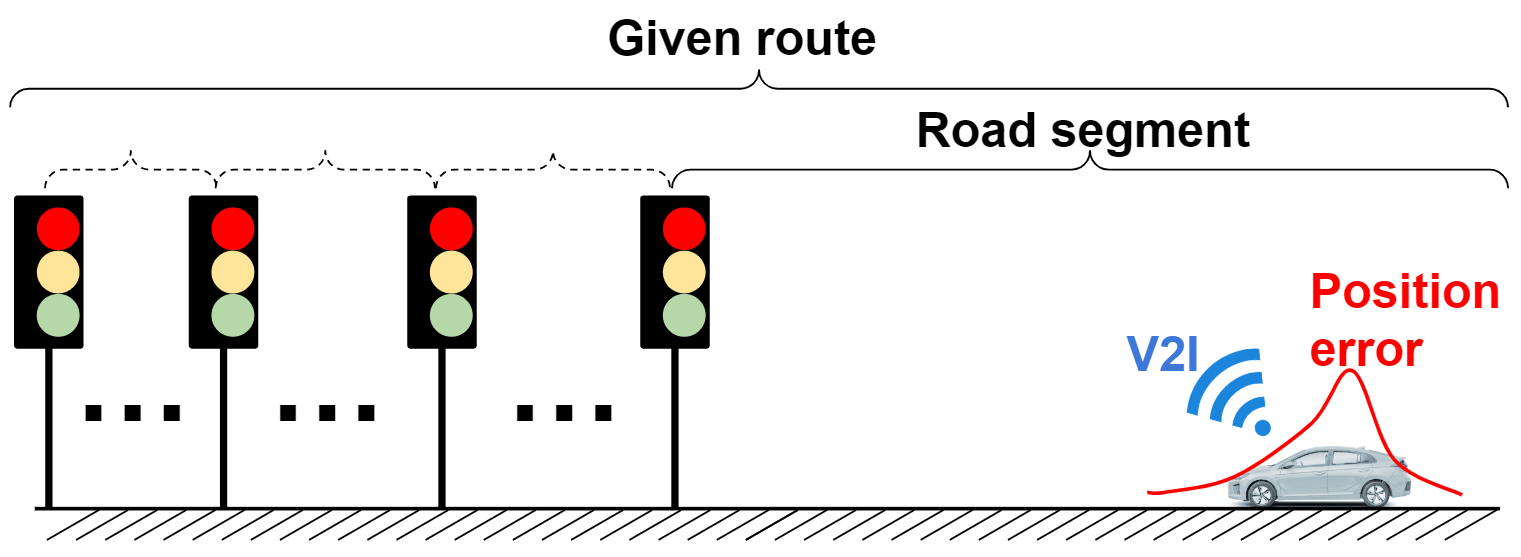}
\vspace{-1.0em}
\caption{The given route is a series of multiple, parameterized road segments.}
\label{fig:road_segmentation}
\end{center}
\vspace{-1.0em}
\end{figure}\\
We define a road segment as the combined section comprising a traffic light and the connected road.
The parameters defining this segment are the current traffic light signal, a traffic light cycle, the remaining time of the current traffic light signal, and the distance to the traffic light.
When considering the route from point A to point B in urban road driving, we consider it as a series of multiple road segments, each distinguished by varying parameters. 

\subsection{Vehicle Model, Measurement Model, and Observer}
We model vehicle longitudinal dynamics as follows:
\begin{equation} \label{eq: longitudinal dynamics}
\begin{aligned}
    & \mathbf{x}(t) = \begin{bmatrix} s(t) & v_x(t) \end{bmatrix}^\top, ~ u(t) = a_x(t), \\
    & \Dot{\mathbf{x}}(t) = \begin{bmatrix} 0 & 1 \\ 0 & 0 \end{bmatrix}\mathbf{x}(t) + \begin{bmatrix} 0 \\ 1 \end{bmatrix}u(t),
\end{aligned}
\end{equation}
where the state $s$ represents a relative longitudinal position along the centerline of the given route, and $v_x$ and $a_x$ are vehicle longitudinal speed and acceleration, respectively.
Throughout the paper, we will refer to $s$ as the position for brevity. 
Forces due to road grade, air drag, and rolling resistance are not included in \eqref{eq: longitudinal dynamics} because $a_x$ is net longitudinal acceleration. 
We regard controlling the vehicle under those forces as the task of the actuator-level controller.

We discretize the dynamics \eqref{eq: longitudinal dynamics} as
\begin{equation} \label{eq:system}
\begin{aligned}
    & \mathbf{x}_{k}=\begin{bmatrix} s_k & v_{x,k} \end{bmatrix}^\top, ~ u_{k} = a_{x,k}, \\
    & \mathbf{x}_{k+1} = \mathbf{A} \mathbf{x}_{k} + \mathbf{B} u_{k},
\end{aligned}
\end{equation}
where $\mathbf{x}_{k}$ denotes the state, and $u_{k}$ denotes the input at time step $k$. $s_k$ represents the position, while $v_x$ and $a_x$ correspond to the vehicle's longitudinal speed and acceleration at time step $k$. The discretization sampling time $T_s$ is 1 sec.

At time step $k$, we measure the states from sensors as:
\begin{equation} \label{eq: sensor measurements}
\begin{aligned}
    & \mathbf{y}_k = \mathbf{C} \mathbf{x}_{k} + \mathbf{D} w_k = \begin{bmatrix} 1 & 0 \\ 0 & 1\end{bmatrix}\mathbf{x}_{k} + \begin{bmatrix} 1 \\ 0 \end{bmatrix}w_k,
\end{aligned}
\end{equation}
where $\mathbf{y}_k$ denotes sensor measurements at time step $k$. The position $s$ is measured by localization modules and $w$ represents its localization uncertainty. The speed $v_x$ is measured by vehicle wheel encoders, 
We assume vehicle speed measurement noises are small enough to neglect them.
We assume that the localization uncertainty $w$ is a random variable with distribution $p(w)$ and bounded convex polytope $\mathcal{W}$ as follows:
\begin{equation} \label{eq: gps error}
\begin{aligned}
    & w_k \sim p(w), ~ w_k \in \mathcal{W}.
\end{aligned}
\end{equation}
In practice, the localization uncertainty may be designed as a Gaussian distribution, which is not bounded. In this case, a high confidence interval can be used to approximate $\mathcal{W}$.

We design a discrete state observer \cite{ogata2010modern} with a predictor to estimate the state $\mathbf{x}_{k}$ as follows:
\begin{equation} \label{eq: observer equation}
\begin{aligned}
    & \hat{\mathbf{x}}_{0} = \mathbf{y}_0, \\
    & \hat{\mathbf{x}}_{k+1} = \mathbf{A} \hat{\mathbf{x}}_{k} + \mathbf{B} u_{k} + \begin{bmatrix} L & 0 \\ 0 & 1 \end{bmatrix}\biggr(\mathbf{y}_{k+1} - \mathbf{C} ( \mathbf{A} \hat{\mathbf{x}}_{k} + \mathbf{B} u_{k})\biggr), \\
    & ~~~~~~ =  \mathbf{A} \hat{\mathbf{x}}_{k} + \mathbf{B} u_{k} + \mathbf{F}n_k,
\end{aligned}
\end{equation}
where $L$ is an observer gain, $\mathbf{F} = \begin{bmatrix} 1 & 0 \end{bmatrix}^\top$, and $n_k = L(s_{k} -\hat{s}_{k}) + Lw_{k+1}$, which is a lumped noise. Note that as the speed measurement is accurate, the observer \eqref{eq: observer equation} is designed to set the current speed estimate to the current speed measurement, i.e., $\hat{v}_{x,k} = \begin{bmatrix} 0 & 1\end{bmatrix}y_k$. On the other hand, as the position measurement is not accurate, the observer \eqref{eq: observer equation} is designed to suppress the localization uncertainty.

\begin{proposition} \label{prop: delta s and lumped noise}
    Let $\Delta s_k = s_{k} -\hat{s}_{k}$. Then, $\Delta s_k \in \mathcal{W}$ and $n_k \in 2L\mathcal{W}$ for all realizations of noise that satisfies \eqref{eq: gps error}.
\end{proposition}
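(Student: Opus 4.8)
The plan is to convert the observer \eqref{eq: observer equation} into a scalar recursion for the position estimation error and then apply a short invariance induction based on convexity (and symmetry) of $\mathcal{W}$. First I would define the estimation error $\mathbf{e}_k=\mathbf{x}_k-\hat{\mathbf{x}}_k=[\,\Delta s_k\ \ \Delta v_k\,]^{\top}$ with $\Delta v_k=v_{x,k}-\hat v_{x,k}$, and derive its dynamics by subtracting \eqref{eq: observer equation} from \eqref{eq:system}. Substituting $\mathbf{y}_{k+1}=\mathbf{C}\mathbf{x}_{k+1}+\mathbf{D}w_{k+1}$ and $\mathbf{x}_{k+1}=\mathbf{A}\mathbf{x}_k+\mathbf{B}u_k$, the $\mathbf{B}u_k$ terms drop out and $\mathbf{e}_{k+1}$ becomes a linear function of $\mathbf{e}_k$ and $w_{k+1}$ alone. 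Because $\mathbf{C}=I$, $\mathbf{D}=[\,1\ \ 0\,]^{\top}$, and the injection gain has a $1$ in its $(2,2)$ entry, the speed row of this recursion is $\Delta v_{k+1}=0$; together with $\Delta v_0=0$ (from $\hat{\mathbf{x}}_0=\mathbf{y}_0$ and the noise-free speed channel) this yields $\Delta v_k\equiv 0$, so the position row collapses to $\Delta s_{k+1}=(1-L)\Delta s_k-Lw_{k+1}$ with $\Delta s_0=-w_0$. As a consistency check this matches the lumped-noise expression $n_k=L(\Delta s_k+w_{k+1})=L\Delta s_k+Lw_{k+1}$ used in \eqref{eq: observer equation}.

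Next I would prove $\Delta s_k\in\mathcal{W}$ for all $k$ by induction, under the standing assumptions that $\mathcal{W}$ is convex (given) and symmetric about the origin (the natural reading of the confidence-interval remark following \eqref{eq: gps error}) and that the observer gain satisfies $L\in[0,1]$. The base case is $\Delta s_0=-w_0\in-\mathcal{W}=\mathcal{W}$. For the inductive step, write $\Delta s_{k+1}=(1-L)\,\Delta s_k+L\,(-w_{k+1})$: this is a convex combination of $\Delta s_k\in\mathcal{W}$ (inductive hypothesis) and $-w_{k+1}\in\mathcal{W}$, hence lies in $\mathcal{W}$ by convexity. The second claim then follows at once: from $n_k=L(\Delta s_k+w_{k+1})$ with $\Delta s_k,w_{k+1}\in\mathcal{W}$ we get $\Delta s_k+w_{k+1}\in\mathcal{W}\oplus\mathcal{W}$, and since $\mathcal{W}$ is convex $\mathcal{W}\oplus\mathcal{W}=2\mathcal{W}$ (write $x+y=2\cdot\tfrac{x+y}{2}$), so $n_k\in L(2\mathcal{W})=2L\mathcal{W}$.

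The algebra here is elementary, so the real care lies in two bookkeeping points. In the error-dynamics derivation I must verify that the plant and observer contributions cancel so the error is driven by $w$ alone, and must notice that the exact speed channel forces $\Delta v_k\equiv 0$ so the recursion is genuinely one-dimensional. In the invariance step I should make explicit the hypotheses the convex-combination argument silently relies on — symmetry of $\mathcal{W}$ (without it, already $\Delta s_0=-w_0$ need not lie in $\mathcal{W}$) and $L\in[0,1]$ (so the weights $1-L$ and $L$ are nonnegative and sum to one) — since the proposition as stated fails for a generic non-symmetric polytope.
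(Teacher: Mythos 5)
Your proof is correct and follows essentially the same route as the paper's: derive the scalar error recursion $\Delta s_{k+1}=(1-L)\Delta s_k-Lw_{k+1}$, run an induction using convexity of $\mathcal{W}$, and then bound $n_k=L\Delta s_k+Lw_{k+1}$ by $L\mathcal{W}\oplus L\mathcal{W}=2L\mathcal{W}$. Your explicit flagging of the hypotheses the argument silently needs --- symmetry of $\mathcal{W}$ about the origin (so that $-w_0\in\mathcal{W}$ and $-Lw_{k+1}\in L\mathcal{W}$) and $L\in[0,1]$ (so that $(1-L)\mathcal{W}\oplus L\mathcal{W}=\mathcal{W}$) --- is a genuine improvement in rigor over the paper's version, which uses both facts without stating them.
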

\begin{proof}
From \eqref{eq:system} and \eqref{eq: observer equation}, we have that:
\begin{equation} \label{eq: estimation error dyn}
\begin{aligned}
    & \hat{s}_0 = \begin{bmatrix}1 & 0 \end{bmatrix} y_0, ~\Delta s_0 =  -w_0, \\
    & \Delta s_{k+1} = (1-L)\Delta s_{k} - Lw_{k+1}.
\end{aligned}
\end{equation}
Proof by induction.
First, $\Delta s_0 = -w_0 \in \mathcal{W}$.
Now, let $\Delta s_{k} \in \mathcal{W}$. Since $\Delta s_{k+1} = (1-L)\Delta s_{k} - Lw_{k+1} \in (1-L) \mathcal{W} \oplus L \mathcal{W} = \mathcal{W}$, $\Delta s_{k+1} \in \mathcal{W}$. Thus, $\Delta s_{k} \in \mathcal{W}$ for all $k \geq 0$ by induction. 
Furthermore, $n_k = L\Delta s_k + Lw_{k+1} \in L\mathcal{W} \oplus L\mathcal{W} = 2L\mathcal{W}$.
\end{proof}
The lumped noise $n_k$ is a random variable with distribution $q(n)$ and bounded support $2L\mathcal{W}$ as follows:
\begin{equation} \label{eq: lumped noise}
\begin{aligned}
    & n_k \sim q(n), ~ n_k \in 2L\mathcal{W}.
\end{aligned}
\end{equation}
$q(n)$ can be explictly calculated by following \cite[eq. (14), (15)]{ejoaof} when $p(w)$ is Gaussian.
In practice, differential GPS, which offers centimeter-level accuracy in positioning, can be used to calculate $n_k$ and approximate its density function $q(n)$.

\subsection{Energy Consumption Stage Cost}
We define the energy $E$ as the sum of the energy stored in battery and/or fuel tank  $E_\text{stor}$ and the kinetic energy $E_\text{kin}$ as:
\begin{equation} \label{eq: energy definition}
\begin{aligned}
    & E(t) = E_\text{stor}(t) + E_\text{kin}(t).
\end{aligned}
\end{equation}
The energy consumption at time step $k$ is defined as the change in energy between time step $k$ and $k+1$ as follows:
\begin{equation} \label{eq: power definition}
\begin{aligned}
    & \Delta E_k = E(kT_s) - E((k+1)T_s).
\end{aligned}
\end{equation}
\begin{remark}
    As the energy in \eqref{eq: energy definition} represents the total energy of the vehicle, which is the closed-system, the value of the total energy always decreases over time. Thus, the energy consumption in \eqref{eq: power definition} is nonnegative for all time step $k \geq 0$.  
\end{remark}

We solve the following regression problem to obtain the parameterized energy consumption cost $\ell(\mathbf{x}_k, u_k)$ given data:
\begin{equation} \label{eq: energy cost regression}
\begin{aligned}
   & \min_{\mathbf{P} \in \mathbb{R}^{3\times3}} \sum_{k=0}^{T_\text{data}} \lVert \Delta E_k - \ell(\mathbf{x}_k, u_k) \rVert_2^2 \\
   & ~~~\text{s.t.,} ~~\ell(\mathbf{x}_k, u_k) = \begin{bmatrix} v_{x,k} & u_k & 1 \end{bmatrix} \mathbf{P} \begin{bmatrix} v_{x,k} \\ u_k \\ 1 \end{bmatrix}, ~\mathbf{P} \succeq 0,
\end{aligned}
\end{equation} 
where $T_\text{data}$ is the end time of the data, and $\mathbf{P}$ is a positive semi-definite matrix.
The parametrized cost is designed to be nonnegative as the energy consumption \eqref{eq: power definition} is nonnegative.
It is noteworthy that our parameterized cost does not depend on the position whose actual value cannot be obtained.
Thus, the following holds:
\begin{equation} \label{eq: CE stage cost}
\begin{aligned}
    & \ell(\mathbf{x}_k, u_k) = \ell(\hat{\mathbf{x}}_k, u_k).
\end{aligned}
\end{equation}

We collected the energy consumption data for our electric test vehicle using equipped sensors and solved the regression problem \eqref{eq: energy cost regression}. The energy consumption measurement data used for the regression was collected from 3.9 km of city driving scenarios at a testing track. The comparison between the data and the regressed model is given in Fig. \ref{fig:energy_cost_regression}. The regression results show that the energy consumption calculated from the regressed model is non-negative and, the error of the total energy consumption is below $1 \%$. The regression model validation results are presented in section \ref{sec: experiment}.
\begin{figure}[ht]
\begin{center}
\includegraphics[width=0.8\linewidth,keepaspectratio]{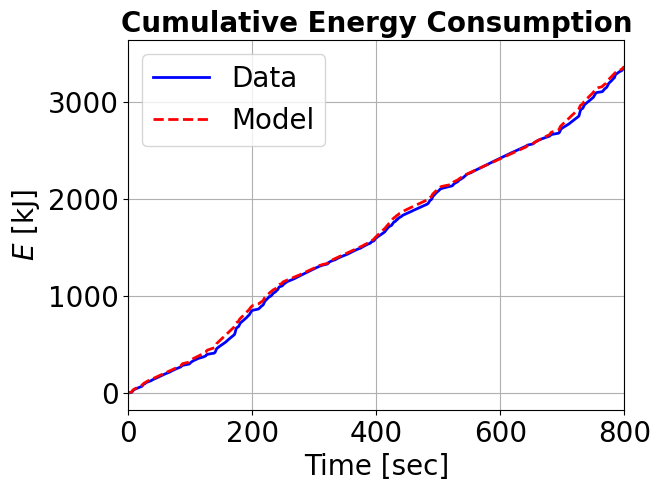}
\vspace{-1.0em}
\caption{Comparison: Vehicle energy measurement data and Simulated energy consumption}
\label{fig:energy_cost_regression}
\end{center}
\vspace{-1.0em}
\end{figure}

\subsection{Constraints}
System (\ref{eq:system}) is subject to the following constraints: 
\begin{equation} \label{eq:constraints}
\begin{aligned}
    & \mathbf{x}_{k} \in \mathcal{X} = \{\mathbf{x} ~|~ 0 \leq \begin{bmatrix} 0 & 1 \end{bmatrix} \mathbf{x} \leq v_{x, \max} \}, \\
    & u_k \in \mathcal{U} = \{u ~|~ a_{x, \min} \leq u \leq a_{x, \max} \}, \\
    & \forall k \geq 0, ~\forall w_k \in \mathcal{W},
\end{aligned} 
\end{equation}
where $v_{x, \max}$ is the maximum allowable speed, and $a_{x, \min, \max}$ is the minimum/maximum acceleration.
\(\mathcal{X}\) is a convex polyhedron, and \(\mathcal{U}\) is a polytope.
As the state constraints are imposed on the vehicle speed, $\mathbf{x}_{k} \in \mathcal{X}$ is identical to $\hat{\mathbf{x}}_{k} \in \mathcal{X}$.

\subsection{Eco-Driving Problem for Connected Vehicles under Localization Uncertainty}
The eco-driving problem for connected vehicles under localization uncertainty is a stochastic optimization problem due to stochastic localization uncertainty \eqref{eq: gps error}.
Specifically, it can be formulated as follows:
\begin{equation} \label{eq:ftocp}
\begin{aligned}
    & J^{\star}(\mathbf{x}_S) = \\
    & \min_{\Pi(\cdot)} ~\,\mathbb{E}_{w_{0:T_{f}}}\Biggr[\sum_{k=0}^{T_{f}} \ell(\hat{\mathbf{x}}_k, ~ \pi_k(\hat{\mathbf{x}}_k)) \Biggr] \\
    & ~ \textnormal{s.t.,} ~~ \mathbf{x}_{k+1}  = \mathbf{A} \mathbf{x}_k  + \mathbf{B} \pi_k (\hat{\mathbf{x}}_k), \\
    & \qquad ~ \mathbf{y}_{k}  = \mathbf{C} \mathbf{x}_k  + \mathbf{D}w_k , ~ w_k  \sim p(w), \\
    & \qquad ~ \hat{\mathbf{x}}_{k+1} =\mathbf{A} \hat{\mathbf{x}}_{k} + \mathbf{B} u_{k} \\
    & \qquad \qquad ~~~~ + \begin{bmatrix} L & 0 \\ 0 & 1 \end{bmatrix}\biggr(\mathbf{y}_{k+1} - \mathbf{C} ( \mathbf{A} \hat{\mathbf{x}}_{k} + \mathbf{B} u_{k})\biggr), \\
    & \qquad ~ \mathbf{x}_0 = \mathbf{x}_S, ~ \hat{\mathbf{x}}_{0} = \mathbf{y}_{0}, \\
    & \qquad ~ \mathbf{x}_{k} \in \mathcal{X}, ~ \pi_k(\hat{\mathbf{x}}_{k}) \in \mathcal{U},  ~\forall w_k \in \mathcal{W},  \\
    & \qquad ~ \mathbf{x}_{T_{f}} \in \mathcal{X}_f,\\
    & \qquad ~ k=0,...,T_{f}-1, 
\end{aligned}
\end{equation}
where $T_f\gg0$ is the task horizon and $\mathcal{X}_f$ represents the region beyond the goal point.
The cost function is an expected sum of the regressed energy consumption state cost $\ell(\cdot,\cdot)$ in \eqref{eq: energy cost regression} evaluated for the estimated state trajectory. From \eqref{eq: CE stage cost}, this cost is identical to the cost evaluated for the actual state trajectory.
The system \eqref{eq:system} should satisfy the state and input constraints \eqref{eq:constraints} while minimizing the expected sum of costs.
Solving this problem in real-time is challenging due to the long time horizon required to complete the given route and the presence of stochastic elements.

\subsection{Solution approach to the problem}
We take three approaches to solve this problem, namely: 
\begin{enumerate}[(i)]
    \item We consider the problem of crossing the current road segment within a specified time only at each step until the vehicle reaches goal point B.
    \item We solve a simpler constrained optimal control problem (OCP) with a short prediction horizon $N \ll T_f$ in a receding horizon fashion.
    \item We approximate the expected cost in \eqref{eq:ftocp} with a sample mean.
\end{enumerate}
(i) and (ii) are to alleviate computational costs due to the long time horizon required to complete the given route, while (iii) is to reformulate the expectation into a tractable term.

When we only consider the current road segment (as specified in (i)), the solution will not be necessarily relevant to the optimal solution of \eqref{eq:ftocp}.
To alleviate this issue, we utilize a high-level green wave search module such as in \cite{sun2020optimal, highlevelgreenwave1, highlevelgreenwave2}, which provides specific times for passing upcoming traffic lights without stopping.

When the horizon $N$ is restricted (as specified in (ii)), it might result in a prediction horizon that does not cover the entire current road segment. This limitation could lead to myopic behavior, as the optimization process focuses solely on the costs within the $N$ horizon stages. To prevent this myopic behavior arising from the limited prediction horizon, we design terminal constraints and a terminal cost function $V(\cdot)$. In this paper, we take a data-driven approach to obtain them.

\section{Method}
In this section, we introduce the data-driven approach to calculate the terminal constraints and a terminal cost function $V(\cdot)$ and present the proposed MPC.
\subsection{Data: State-Input pairs} \label{sec: data}
We generate the data set for the data-driven algorithm through two processes, initialization and augmentation.
\begin{remark}
    In this paper at Sec. \ref{sec: single road segment}, the data set is generated through simulation. However, it is worth mentioning that data generation is not exclusively tied to simulation; it can also be accomplished through closed-loop testing.
\end{remark}

To initialize the data set, we design a cruise control algorithm \cite[Sec. IV. C]{ejoaiv}, which satisfies the constraints \eqref{eq:constraints}. We collect state-input pairs while running the cruise control algorithm multiple times and construct the initial data set as follows:
\begin{equation} \label{eq: initial dataset}
    \mathcal{D} = \{(\mathbf{x}_d, ~u_d)\}_{d=0}^{N^0_\text{data}},
\end{equation}
where $N^0_\text{data}$ represents the number of the data points after the initialization process.

We recursively augment the data set starting from the initial data set \eqref{eq: initial dataset}.
We describe one iteration of the data augmentation process below.
Using the provided data, we construct two sets to ensure passing the traffic light within the specific time while obeying the traffic light as per Sec. \ref{sec: learning constraints}.
Later, an intersection of these two sets is utilized to define the terminal constraint. Moreover, we construct the terminal cost function as per Sec. \ref{sec: learning V function}. 
We design the proposed MPC by imposing the terminal constraint on its terminal state and adding the terminal cost as per Sec. \ref{sec: Data-driven MPC}.
Subsequently, we run the proposed MPC.
During this procedure, the proposed MPC can be infeasible when a set that satisfies the terminal constraints is empty due to an insufficient amount of data.
In this case, we employ the cruise control algorithm \cite[Sec. IV. C]{ejoaiv} as a backup controller, which ensures the recursive feasibility of the proposed control algorithm.
We collect state-input pairs at each time step, and augment the data set as follows:
\begin{equation}
    \mathcal{D} \leftarrow \mathcal{D} \medcup \{(\mathbf{x}_d, ~u_d)\}_{d=0}^{N^{a}_\text{data}},
\end{equation}
where $N^{a}_\text{data}$ represents the number of augmenting data points.
This data augmentation process iterates until the closed-loop performance is settled, which will be shown in Sec. \ref{sec: single road segment}.

\subsection{Learning the terminal constraint} \label{sec: learning constraints}
In this section, we design two sets in a data-driven way, and the intersection of these two sets will define the terminal constraint of the proposed MPC in Sec. \ref{sec: Data-driven MPC}

We adopt the notion of \textit{Robust Controllable Set} to design these two sets.
A set is $N$-step \textit{Robust Controllable}, if all states belonging to the set can be robustly driven against additive noises, through a time-varying control law, to the target set in $N$-step \cite[Def.10.18]{borrelli2017predictive}.
We design two \textit{Robust Controllable Sets} against additive lumped noise $n$ \eqref{eq: lumped noise}:
\begin{enumerate} 
    \item $\mathcal{S}_{t_\text{red}}$: $t_\text{red}$-step Robust Controllable Set where the target set is the region before the traffic light, i.e., $\mathcal{T}_s = \{\mathbf{x} ~|~ \begin{bmatrix} 1 & 0 \end{bmatrix} \mathbf{x} \leq s_\text{tl}\}$.
    \item $\mathcal{P}_{t_\text{green}}$: $t_\text{green}$-step Robust Controllable Set where the target set is the region after the traffic light, i.e., $\mathcal{T}_p = \{\mathbf{x} ~|~ \begin{bmatrix} 1 & 0 \end{bmatrix} \mathbf{x} \geq s_\text{tl}\}$.
\end{enumerate}
$s_\text{tl}$ is the location of the upcoming traffic light.
$t_\text{red}$ and $t_\text{green}$ are defined in Fig. \ref{fig:definition of tred tgreen} where $k_\text{pass}$ denote the time to pass the traffic light specified by the high-level module. 

The intersection of these two sets, $\mathcal{S}_{t_\text{red}} \medcap \mathcal{P}_{t_\text{green}}$, is utilized to define the terminal constraint.
Let $\hat{\mathbf{x}}_{N|k}$ denotes a predicted estimated state at time step $k$+$N$ calculated from the current estimate $\hat{\mathbf{x}}_{k}$, i.e., a terminal state.
Then, the terminal constraint, $\hat{\mathbf{x}}_{N|k} \in \mathcal{S}_{t_\text{red}} \medcap \mathcal{P}_{t_\text{green}}$, enforces not to pass the traffic light before it turns green while guaranteeing that the vehicle crosses the traffic light within the time specified by the high-level module.
Fig. \ref{fig:planning_strategy} illustrates the terminal constraints. 
\begin{figure}[ht]
\begin{center}
\includegraphics[width=0.85\linewidth,keepaspectratio]{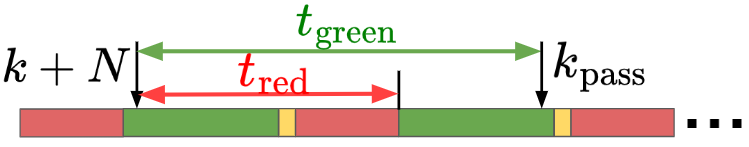}
\caption{Defintion of $t_\text{red}$ and $t_\text{green}$}
\label{fig:definition of tred tgreen}
\end{center}
\end{figure}
\begin{figure}[ht]
\begin{center}
\includegraphics[width=0.85\linewidth,keepaspectratio]{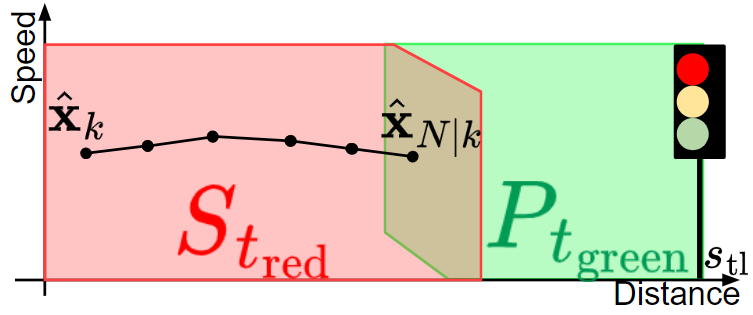}
\caption{Illustration of the terminal constraints $\mathcal{S}_{t_\text{red}}$ and $\mathcal{P}_{t_\text{green}}$: the constraint $\hat{\mathbf{x}}_{N|k} \in \mathcal{S}_{t_\text{red}}$ is to ensure that the vehicle stays behind the traffic light until $t_\text{red}+N$ steps, while the constraint $\hat{\mathbf{x}}_{N|k} \in \mathcal{P}_{t_\text{green}}$ is to ensure that the vehicle passes the traffic light within $t_\text{green}+N$ steps.}
\label{fig:planning_strategy}
\end{center}
\end{figure}

In this paper, the sets $\mathcal{S}_{t_\text{red}}$ and $\mathcal{P}_{t_\text{green}}$ are calculated in a data-driven way by Algorithm \ref{alg: RNTW}.
The data $\mathcal{D}$ in this algorithm are feasible state-input pairs $\{(\mathbf{x}_d, ~u_d)\}_{d=0}^{N_\text{data}}$, meaning that $\mathbf{x}_d \in \mathcal{X}$ and $u_d \in \mathcal{U}$ for all $d \in \{0,\cdots, N_\text{data}\}$. 
$\mathcal{R}_{i-1}$ represents $i-1$-step robust controllable set, which is initialized with to the target set $\mathcal{T}$ at line 2.
Line $6$ at the iteration $i$ checks whether the state $\mathbf{x}_d$ of the system \eqref{eq: observer equation} can be robustly steered to $\mathcal{R}_{i-1}$. If the condition is satisfied, the state $\mathbf{x}_d$ is included in a set $\mathbb{X}$. At the end of the iteration $i$ (line 9), $\mathcal{R}_i$ is set as the convex hull of the set $\mathbb{X}$.
\begin{algorithm} \label{alg: RNTW}
    \SetKwInOut{KwIn}{Input}
    \SetKwInOut{KwOut}{Output}
    \KwIn{Data of feasible pairs $\mathcal{D} = \{(\mathbf{x}_d, ~u_d)\}_{d=0}^{N_\text{data}}$, \\ Remaining steps $t$, Target convex set $\mathcal{T}$, \\ System parameters $\mathbf{A}, ~\mathbf{B}, ~\mathbf{F}, ~\mathcal{W}, ~L$}
    \KwOut{$t$-step \textit{Robust Controllable Set} $\mathcal{R}_t$} 

    $\mathcal{R}_0 \leftarrow \mathcal{T}$
    
    
    \For{$i \leftarrow 1$ \KwTo $t$}{
        $\mathbb{X} \leftarrow \emptyset$
        
        \For{$j \leftarrow 0$ \KwTo $N_\text{data}$}{
            \If{$\mathbf{A}\mathbf{{x}}_j + \mathbf{B}u_j \in \mathcal{R}_{i-1} \ominus 2L \mathbf{F} \mathcal{W}$}{
                $\mathbb{X} \leftarrow \mathbb{X} \medcup \{\mathbf{{x}}_j\}$
                
             }
        }
        $\mathcal{R}_i \leftarrow \text{conv}(\mathbb{X})$ \\
    }
    \KwRet{$\mathcal{R}_t$} 
    \caption{Data-driven computation of \\ $~~~~~~~~~~~~~~~~~~ t$-step \textit{Robust Controllable Set} $\mathcal{R}_t$}
\end{algorithm} \\
By the following theorem, the output of the Algorithm \ref{alg: RNTW} is a $t$-step robust controllable set for a given target convex set $\mathcal{T}$. 
\begin{theorem} \label{thm1}
    The output of Algorithm \ref{alg: RNTW} $\mathcal{R}_t$ is $t$-step robust controllable set of the system \eqref{eq: observer equation} perturbed by the noise \eqref{eq: lumped noise} for a given target convex set $\mathcal{T}$ subject to the constraints \eqref{eq:constraints}.  
\end{theorem}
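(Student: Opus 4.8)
The plan is to prove by induction on the iteration index $i$ that $\mathcal{R}_i$ is an $i$-step robust controllable set, i.e., that every $\hat{\mathbf{x}} \in \mathcal{R}_i$ can be driven, for every admissible realization of the lumped noise sequence $n_{k} \in 2L\mathcal{W}$, into the target set $\mathcal{T}$ within $i$ steps while satisfying the state and input constraints \eqref{eq:constraints} along the way. The base case is immediate since $\mathcal{R}_0 = \mathcal{T}$ by line 2 of Algorithm \ref{alg: RNTW}. For the inductive step I would assume $\mathcal{R}_{i-1}$ is an $(i-1)$-step robust controllable set and show $\mathcal{R}_i = \text{conv}(\mathbb{X})$ is an $i$-step one.

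First I would establish the claim for an arbitrary data point $\mathbf{x}_j \in \mathbb{X}$. By the condition checked on line 6, $\mathbf{A}\mathbf{x}_j + \mathbf{B}u_j \in \mathcal{R}_{i-1} \ominus 2L\mathbf{F}\mathcal{W}$. Using the observer dynamics \eqref{eq: observer equation} written as $\hat{\mathbf{x}}_{k+1} = \mathbf{A}\hat{\mathbf{x}}_k + \mathbf{B}u_k + \mathbf{F}n_k$ together with $n_k \in 2L\mathcal{W}$ from \eqref{eq: lumped noise}, applying input $u_j$ at the state $\mathbf{x}_j$ yields a successor $\mathbf{A}\mathbf{x}_j + \mathbf{B}u_j + \mathbf{F}n$ which, by the definition of the Pontryagin difference, lies in $\mathcal{R}_{i-1}$ for every admissible $n \in 2L\mathcal{W}$; note $2L\mathbf{F}\mathcal{W} = \mathbf{F}\,(2L\mathcal{W})$ so the set arithmetic matches. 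Since the data pairs are feasible, $\mathbf{x}_j \in \mathcal{X}$ and $u_j \in \mathcal{U}$, so the first step respects the constraints, and then the induction hypothesis supplies a time-varying control law completing the drive to $\mathcal{T}$ in $i-1$ further steps. Hence every point of $\mathbb{X}$ is $i$-step robust controllable.

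Next I would lift this from $\mathbb{X}$ to its convex hull. The key ingredient is that the one-step reachability-to-$\mathcal{R}_{i-1}$ property is preserved under convex combinations: if $\hat{\mathbf{x}} = \sum_m \lambda_m \mathbf{x}_{j_m}$ with $\lambda_m \ge 0$, $\sum_m \lambda_m = 1$, and each $\mathbf{x}_{j_m} \in \mathbb{X}$ witnessed by input $u_{j_m}$, then the input $u = \sum_m \lambda_m u_{j_m}$ is admissible because $\mathcal{U}$ is convex (it is a polytope by \eqref{eq:constraints}), the state $\hat{\mathbf{x}} \in \mathcal{X}$ because $\mathcal{X}$ is a convex polyhedron, and by linearity of the dynamics $\mathbf{A}\hat{\mathbf{x}} + \mathbf{B}u = \sum_m \lambda_m(\mathbf{A}\mathbf{x}_{j_m} + \mathbf{B}u_{j_m})$ is a convex combination of points in the convex set $\mathcal{R}_{i-1} \ominus 2L\mathbf{F}\mathcal{W}$ (a Pontryagin difference of convex sets is convex), hence lies in it. So the successor under the worst-case noise lands in $\mathcal{R}_{i-1}$, and the induction hypothesis finishes the argument. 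Applying this at $i = t$ gives the theorem; by feasibility of the data, the intermediate $\mathcal{R}_i$ are all subsets of $\mathcal{X}$, so constraint satisfaction holds throughout.

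The main obstacle I anticipate is the convex-hull step: one must argue carefully that a \emph{single} control law (the convex combination of the data inputs at the first step, then splicing the inductively guaranteed laws) works uniformly against \emph{all} noise realizations, rather than merely that each vertex is controllable. This requires that the first-step successor set stays inside $\mathcal{R}_{i-1}$ robustly — which is exactly why the algorithm uses the Pontryagin difference with $2L\mathbf{F}\mathcal{W}$ rather than $\mathcal{R}_{i-1}$ itself — and that convexity of $\mathcal{X}$, $\mathcal{U}$, and the Pontryagin difference is invoked correctly. A secondary subtlety worth a sentence is the set-arithmetic identity $(1-L)\mathcal{W} \oplus L\mathcal{W} = \mathcal{W}$ style reasoning implicit in matching $2L\mathbf{F}\mathcal{W}$ with the support of $\mathbf{F}n_k$; this is routine given Proposition \ref{prop: delta s and lumped noise} but should be stated.
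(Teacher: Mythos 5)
Your proposal is correct and follows essentially the same route as the paper: induction on the iteration index, with the base case $\mathcal{R}_0=\mathcal{T}$, the line-6 Pontryagin-difference condition certifying one-step robust steering for each data point, and convexity of $\mathcal{X}$, $\mathcal{U}$, and the relevant sets lifting the property to the convex hull. Your treatment of the hull step (arguing the convex combination of nominal successors stays in the convex set $\mathcal{R}_{i-1}\ominus 2L\mathbf{F}\mathcal{W}$, so robustness against all noise follows directly) is a slightly cleaner packaging of the paper's argument, which instead takes convex combinations of noise realizations and notes they cover $2L\mathcal{W}$, but the substance is identical.
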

\begin{proof}
    See Appendix.
\end{proof}

\subsection{Learning the terminal cost $V(\cdot)$} \label{sec: learning V function}
The terminal cost $V(\cdot)$ of the proposed MPC in Sec. \ref{sec: Data-driven MPC} represents the cost-to-go (or remaining energy-to-spend) function. 
The classical way to solve this problem is dynamic programming \cite{bertsekas2012dynamic, bae2022ecological, yang2020convex} which requires gridding the state and input space.
Due to this gridding, the control input is restricted to elements of the discrete inputs, which is not desirable for automated vehicles where the passengers’ comfort is also a major concern \cite{gonzalez2015review}.
Moreover, as the size of the grid decreases, its computational cost exponentially increases.
To resolve this issue, we modify and employ a data-driven method in \cite{ejoalmpc}, which does not require gridding and provides continuous inputs.

Given the data set $\mathcal{D}$, we calculate the cost-to-go values for all data points in backward from the set $\mathcal{T} = \{\mathbf{x} ~|~ \begin{bmatrix} 1 & 0 \end{bmatrix} \mathbf{x} \geq s_\text{tl}\}$, where the cost-to-go value of every point is zero. Subsequently, the terminal cost $V(\cdot)$ is calculated as a convex combination of the calculated cost-to-go values. Specifically, Algorithm \ref{alg: ctg value for data pts} is employed to calculate the terminal cost. Note that $\mathbf{J}_k$ declared in line $5$ contains the cost-to-go values of the data points $\{(\mathbf{x}_d, ~u_d)\}_{d=0}^{N_\text{data}}$ calculated until $k$ iterations of while loop. $\bm{\Delta}$ in line $14$ is a set of convex coefficients, i.e., $\bm{\Delta}=\{\bm{\lambda} \in \mathbb{R}^{N_\text{data} \times 1} ~|~ \bm{\lambda} \geq \mathbf{0}^{N_\text{data} \times 1}, ~ \bm{1}^\top \bm{\lambda} = 1\}$ where $\bm{1}$ is an $N_\text{data} \times 1$ vector filled with ones. The expectation in line $8$ is approximated with a sample mean of its value.
\begin{algorithm} \label{alg: ctg value for data pts}
    \SetKwInOut{KwIn}{Input}
    \SetKwInOut{KwOut}{Output}
    \KwIn{Data of feasible pairs $\mathcal{D} = \{(\mathbf{x}_d, ~u_d)\}_{d=0}^{N_\text{data}}$, \\ Target convex set $\mathcal{T} = \{\mathbf{x} ~|~ \begin{bmatrix} 1 & 0 \end{bmatrix} \mathbf{x} \geq s_\text{tl}\}$, \\
    Parameters $\mathbf{A}, ~\mathbf{B}, ~\mathbf{F}, ~\mathbf{W}, ~L$, $q(n)$}
    \KwOut{The cost-to-go function $V(x)$}
  
    $V_0(\mathbf{{x}}) = 0 ~\forall \mathbf{{x}} \in \mathcal{T}$

    $\mathbf{J}_0 \leftarrow \mathbf{0}^{1 \times N_\text{data}}, ~ k \leftarrow -1$

    \DontPrintSemicolon
    \Repeat{$\mathbf{J}_{k+1} = \mathbf{J}_k$}{
        $\mathbf{J}_{k+1} \leftarrow \mathbf{0}^{1 \times N_\text{data}}, ~k \leftarrow k+1$
        
        $\mathcal{R}_{k} \leftarrow$ Algorithm \ref{alg: RNTW}$(\mathcal{D}, k, \mathcal{T})$
        
        \For{$j \leftarrow 0$ \KwTo $N_\text{data}$}{
            \eIf{$\mathbf{A}\mathbf{{x}}_j + \mathbf{B}u_j \in \mathcal{R}_{k} \ominus 2L\mathbf{F} \mathcal{W}$}{
                $[\mathbf{J}_{k+1}]_j \leftarrow \ell(\mathbf{{x}}_j, u_j) + \mathbb{E}_n[V_k(\mathbf{A}\mathbf{{x}}_j+ \mathbf{B}u_j + \mathbf{F}n)]$
             }
             {$[\mathbf{J}_{k+1}]_j \leftarrow \infty$}
        }

        $V_{k+1}(\mathbf{x}) \leftarrow \min_{\bm{\lambda} \in \bm{\Delta}} \mathbf{J}_k \bm{\lambda}$ s.t. $\sum_{d=0}^{N_\text{data}} [\bm{\lambda}]_d \mathbf{x}_d  = \mathbf{x}$
    }
    $V(\mathbf{x}) \leftarrow V_{k+1}(\mathbf{x})$ 
    
    \KwRet{$V(\mathbf{x})$}
    \caption{Data-driven computation of $V(\cdot)$}
\end{algorithm}

\subsection{Model Predictive Control for Eco-driving} \label{sec: Data-driven MPC}
By incorporating the terminal constraints and the terminal cost function, we design an MPC controller as follows:
\begin{equation} \label{eq:mpc ori}
\begin{aligned}
    & J_\text{MPC}(\hat{\mathbf{x}}_k) =  \\
    & \min_{\Pi(\cdot)} ~\,\mathbb{E}_{n_{0:N-1}}\Biggr[\sum_{i=0}^{N-1} \ell(\hat{\mathbf{x}}_{i|k}, ~ \pi_k(\hat{\mathbf{x}}_{i|k})) + V(\hat{\mathbf{x}}_{N|k}) \Biggr] \\
    & ~ \textnormal{s.t.,} ~~ \hat{\mathbf{x}}_{i+1|k} =\mathbf{A} \hat{\mathbf{x}}_{i|k} + \mathbf{B} \pi_k(\hat{\mathbf{x}}_{i|k}) + \mathbf{F}n_i, \\
    & \qquad ~\, \hat{\mathbf{x}}_{0|k}=\hat{\mathbf{x}}_k, ~n_i \sim q(n), \\
    & \qquad ~\, \hat{\mathbf{x}}_{1|k} \oplus \mathbf{F}\mathcal{W} \subseteq \mathcal{T}_s,  ~\text{if the light is red,} \\
    & \qquad ~\, \hat{\mathbf{x}}_{i|k} \in \mathcal{X}, ~\pi_k(\hat{\mathbf{x}}_{i|k})) \in \mathcal{U}, ~\forall n_i \in 2L\mathcal{W},, \\
    & \qquad ~\, \hat{\mathbf{x}}_{N|k} \oplus \mathbf{F}\mathcal{W} \subseteq \mathcal{S}_{t_\text{red}} \medcap \mathcal{P}_{t_\text{green}}, ~\forall n_i \in 2L\mathcal{W},\\
    & \qquad ~\, i=0,...,N-1 
\end{aligned}
\end{equation}
where $\hat{\mathbf{x}}_{i|k}$ is a predicted state estimate at time step $k+i$, and $\mathcal{T}_s = \{\mathbf{x} ~|~ \begin{bmatrix} 1 & 0 \end{bmatrix} \mathbf{x} \leq s_\text{tl}\}$.
Note that the estimation error $\Delta s \in \mathcal{W}, ~\forall n_k \in 2L\mathcal{W}$ from Proposition \ref{prop: delta s and lumped noise}.
Thus, the terminal constraint in \eqref{eq:mpc ori} ensures that the predicted actual state ${\mathbf{x}}_{N|k} \in \mathcal{S}_{t_\text{red}} \medcap \mathcal{P}_{t_\text{green}}$, and the constraint $\hat{\mathbf{x}}_{1|k} \oplus \mathbf{F}\mathcal{W} \subseteq \mathcal{T}_s$ ensures the vehicle stay behind the traffic light if traffic light color is red.
We point out that as the system \eqref{eq: observer equation} is uncertain, the optimal control problem \eqref{eq:mpc ori} consists of finding state feedback policies \(\Pi(\cdot) = \{\pi_0(\cdot),\pi_1(\cdot),...,\pi_{N-1}(\cdot)\} \).
This MPC formulation is not computationally tractable as optimizing over control policies \( \{\pi_0(\cdot),\pi_1 (\cdot),...\} \) involves an infinite-dimensional optimization.

We simplify the control policy as \(\pi_k(\cdot) = u_k\), eliminating the need for expectations on the stage costs, and approximate the expectation of the terminal cost as a sample mean. Additionally, we reformulate the constraints on estimate states to those on nominal states\footnote{See Appendix for the details of the constraint reformulation and sample mean approximation.}.
Specifically, we design a tractable MPC controller as follows:
\begin{equation} \label{eq:mpc reform}
\begin{aligned}
    & \hat{J}_\text{MPC} (\hat{\mathbf{x}}_{k}, ~sn_{1:M}) = \\
    & \min_{\substack{u_{0:N-1|k}}} ~\, \sum_{i=0}^{N-1} \ell(\bar{\mathbf{x}}_{i|k}, ~ u_{i|k}) +\frac{1}{M}\sum_{m=1}^{M} V(\bar{\mathbf{x}}_{N|k} + \mathbf{F}sn_m)\\
    & ~~~~ \textnormal{s.t.,} \, ~~~~ \bar{\mathbf{x}}_{i+1|k} =\mathbf{A} \bar{\mathbf{x}}_{i|k} + \mathbf{B} u_{i|k} \\
    & \qquad \qquad ~ \bar{\mathbf{x}}_{0|k} = \hat{\mathbf{x}}_{k}, \\
    & \qquad \qquad ~ \bar{\mathbf{x}}_{1|k} \in \mathcal{T}_s \ominus (2L+1)\mathbf{F}\mathcal{W},  ~\text{if the light is red,} \\
    & \qquad \qquad ~ \bar{\mathbf{x}}_{i|k} \in \mathcal{X}, ~ u_{i|k} \in \mathcal{U},\\
    & \qquad \qquad ~ \bar{\mathbf{x}}_{N|k} \in \mathcal{S}_{t_\text{red}} \ominus (2LN+1)\mathbf{F}\mathcal{W},\\
     & \qquad \qquad ~ \bar{\mathbf{x}}_{N|k} \in \mathcal{P}_{t_\text{green}} \ominus (2LN+1)\mathbf{F}\mathcal{W},
\end{aligned}
\end{equation}
where $sn_{1:M}$ denotes $M$ samples of noise derived from the random variable $\sum_{i=0}^{N-1}n_i$, where $n_i$ follows a distribution $q(w)$, $\bar{\mathbf{x}}_{i|k}$ is the nominal state and $u_{i|k}$ is the input at predicted time step $k+i$. 
Note that the objective function in \eqref{eq:mpc reform} is identical to that in \eqref{eq:mpc ori} since the stage cost in \eqref{eq: energy cost regression} does not depend on the position and thus $\ell(\hat{\mathbf{x}}_k, u_k) = \ell(\bar{\mathbf{x}}_k, u_k)$.

\subsection{Implementation}
The proposed MPC \eqref{eq:mpc reform} is a convex optimization problem.
The stage cost is convex as $\mathbf{P} \succeq 0$ \eqref{eq: energy cost regression}. The terminal cost $V(\cdot)$ is convex as it is the optimal objective function of a multiparametric linear program \cite[Thm 6.5]{borrelli2017predictive} as described in line 14 of Algorithm \ref{alg: ctg value for data pts}.
The system equation is linear, and state/input constraints \eqref{eq:constraints} are convex. 
The set $\mathcal{W}$ is convex polytope \eqref{eq: gps error}, and the sets $\mathcal{S}_{t_\text{red}}$ and $\mathcal{P}_{t_\text{green}}$ are convex polytopes as they are calculated via convex hull operation of a finite number of points as described in line 9 of Algorithm \ref{alg: RNTW}. Since the Pontryagin difference between two convex polytopes is convex, the terminal constraints are convex.

To implement the proposed MPC \eqref{eq:mpc reform}, we utilize CVXPY \cite{diamond2016cvxpy} as a modeling language and MOSEK as a solver \cite{mosek}.

\section{Validation} \label{sec: experiment}
\subsection{Baseline}
There are two baseline algorithms considered in this study. 
The first baseline algorithm is the cruise controller detailed in Section \ref{sec: data}. 
The second baseline algorithm is the algorithm introduced in \cite{bae2019VIL}.
The major differences between the algorithm in \cite{bae2019VIL} and the proposed algorithm are:
\begin{itemize}
    \item The algorithm in \cite{bae2019VIL} disregards localization uncertainty, while the proposed algorithm addresses this uncertainty.
    \item The algorithm in \cite{bae2019VIL} cannot ensure passing each traffic light within a specified time, whereas the proposed algorithm can do so through terminal constraints.
    \item The algorithm in \cite{bae2019VIL} determines speed references for the entire route, while the proposed algorithm calculates speed references for a short prediction horizon, requiring less computational power.
    \item The algorithm in \cite{bae2019VIL} uses an additional speed tracking controller, while the proposed algorithm, as an MPC controller with longitudinal acceleration input, does not.
\end{itemize}
Note that for three algorithms, there is an actuator-level controller that converts longitudinal acceleration commands into electric motor torque inputs.

\subsection{Localization Uncertainty and Controller Parameters}
Throughout the validation, we assume that a localization module provides position information every $1 sec$ under the following localization uncertainty $w$:
\begin{equation}
    p(w) \sim U(-3, 3), ~\mathcal{W}=\{w ~|~ 3\leq w \leq 3\},
\end{equation}
where $U$ represents a uniform distribution. $3m$ in longitudinal position error exceeds the accuracy of lane-level positioning \cite{williams2020qualitative}, which can be achievable using off-the-shelf systems using vision \cite{mobileye} and V2I communication \cite{cohda}. It is $95 \%$ Circle of Error Probable (CEP) for commercially available GPS \cite{vbox}.

We set the prediction horizon $N=5$, which represents $5$sec prediction as the discretization time $T_s = 1$sec.
The observer gain $L$ in \eqref{eq: observer equation} is set to $\frac{1}{4N}$. 

\subsection{Data collection and Training} \label{sec: single road segment}
We collect the data from two simple scenarios using simulation. Given the simplicity of the system \eqref{eq:system}, data collection through simulation is feasible.

In the first scenario, the vehicle's initial speed is set to zero and the testing road is a single road segment with fixed parameters outlined in Table \ref{table: param 1}. Essentially, the vehicle needs to pass an upcoming traffic light, situated $200$m away, within a 20-second interval as directed by the high-level module.
\begin{table}[h]
\centering
\caption{Parameters for the single road segment scenario}
\label{table: param 1}
\begin{tabular}{ ||l||c||} 
 \hline
 Parameter & Value\\
 \hline
 Current traffic signal &  Green \\
 Traffic light cycle  & Green: $30$s, Yellow: $5$s, Red: $25$s \\ 
 Remaining time/distance & $25$s/$200$m \\
 Time to pass traffic light & $20$s \\
 \hline
\end{tabular}
\end{table}

As explained in Sec. \ref{sec: data}, we initialize the data set by executing the cruise controller multiple times at different speeds. Subsequently, we employ MPC \eqref{eq:mpc reform}, computing the terminal constraints from Algorithm \ref{alg: RNTW} and the terminal cost from Algorithm \ref{alg: ctg value for data pts}. While running the MPC \eqref{eq:mpc reform}, we record the closed-loop state and input pairs. After completing each task iteration, we augment the data using these state-input pairs.

Fig. \ref{fig:energyconsumption_single} illustrates the learning curve in the single road segment scenario with fixed parameters. 
The x-axis denotes the number of the data augmentation process by iteratively conducting the given scenario.
Due to the uncertainty described in system \eqref{eq:system}, the total energy consumption is a random variable. Therefore, we calculate the sample mean of the total energy consumption by performing 100 Monte Carlo simulations of the system \eqref{eq:system} in closed-loop using the MPC \eqref{eq:mpc reform} after each data augmentation process is completed. 
The results demonstrate $16 \%$ improvement in total energy consumption as the number of data augmentations (or task iterations) increases, eventually leading to settled performance.
\begin{figure}[ht]
\begin{center}
\includegraphics[width=1\linewidth,keepaspectratio]{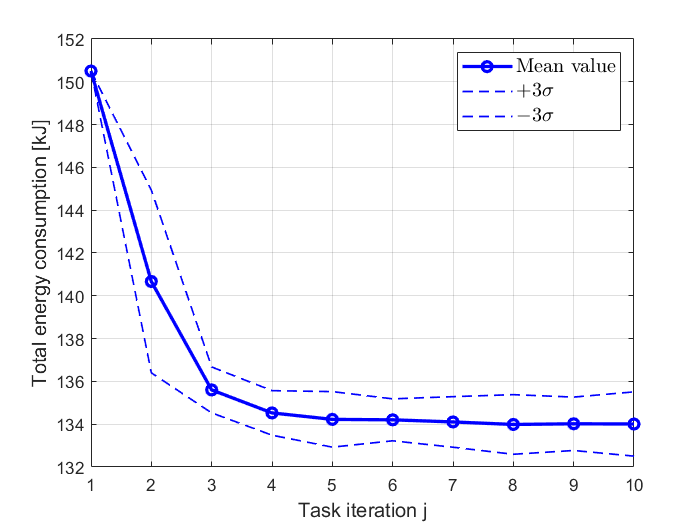}
\caption{Total Energy Consumption Improvement with Increasing Data Size. $100$ Monte Carlo simulations for each task iteration.}
\label{fig:energyconsumption_single}
\end{center}
\end{figure}

In the second scenario, the vehicle maneuvers on a single road segment with randomized parameters and random initial speed.
To cope with realistic scenarios where the parameter values and the initial speed vary, we introduce this training process. 
Upon augmenting the data $10$ times in this second scenario, we conclude the training process.

\subsection{Test setup and scenario}
Our test setup and scenario are presented in Fig. \ref{fig:test}. 
\begin{figure}
\vspace{+1.0em}
\begin{center}
\includegraphics[width=1.0\linewidth,keepaspectratio]{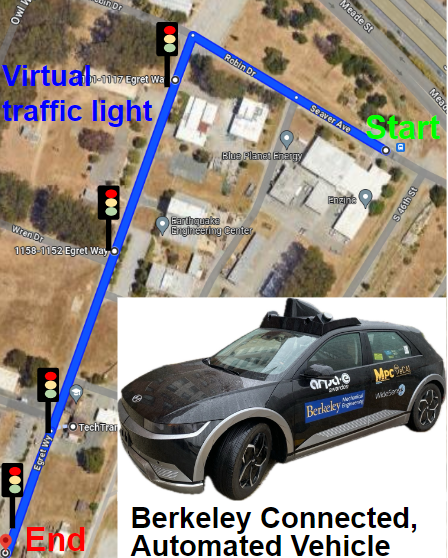}
\caption{Test vehicle and scenario}
\label{fig:test}
\end{center}
\end{figure}\\
We utilized the retrofitted Hyundai Ioniq 5 as the test vehicle maneuvering around the physical testing site. Additionally, along the testing route, we located four virtual traffic lights at distances $s=[189, \text{m}, 378, \text{m}, 490, \text{m}, 553, \text{m}]$. We designed the traffic cycle of four traffic lights such that the vehicle can pass without stopping if it maintains a constant speed of $5 \text{m/s}$. The specified times for passing each traffic light are $k_\text{pass}=[43 , \text{sec}, 81 \text{sec}, 103 \text{sec}, 116 \text{sec}]$. There are no surrounding vehicles to evaluate the performance under free-flow conditions.

The initial condition of the test vehicle is idle speed, meaning no driver pedal inputs. Starting from idle speed, the vehicle needs to complete the given route under localization uncertainty while crossing each traffic light within the time specified by the high-level green wave search module.

\subsection{Test results}
The reference speed of the cruise controller is set to finish the given route within $116sec$, which is the specified time to pass the fourth traffic light.
The regularization parameter in \cite[eq. (4)]{bae2019VIL} is tuned to show a similar travel time to the proposed algorithm.
For each algorithm, we conduct the same scenario five times and evaluate the energy consumption in \eqref{eq: energy definition}.
The change of $E_\text{stor}$ is calculated using the equipped voltage and current sensors, while the change of $E_\text{kin}$ is calculated by a change in kinetic energy between the initial and the end states.
Energy consumption results are described in Table \ref{table: energy eval}.
Compared to the cruise control with constant speed reference, the proposed algorithm shows $22\%$ improvement in average energy consumption. Compared to the algorithm in \cite{bae2019VIL}, the proposed algorithm shows $11.6\%$ improvement in average energy consumption while the proposed algorithm is $14.9\%$ faster in average travel time.
\begin{table}[h]
\centering
\caption{Closed-loop performance of algorithms. Mean(Minimum/Maximum)}
\label{table: energy eval}
\begin{tabular}{ ||l||c||c||} 
 \hline
 Algorithm&Energy consumption[kJ]&Travel time[s]\\
 \hline
 Cruise control&$260.0 (239.8/285.0)$&$114.5(114.1/115.0)$\\
 Algorithm \cite{bae2019VIL} &$229.4(202.3/279.8)$ &$133.2(132.0/136.1)$\\ 
 Proposed&$202.7(187.4/215.1)$&$113.4(113.1/113.5)$\\
 \hline
\end{tabular}
\end{table}


The speed and the travel time profiles are presented in Fig.\ref{fig:energyconsumption_test}.
As illustrated in Fig.\ref{fig:energyconsumption_test}. (b), the proposed algorithm crosses each traffic light within the time specified by a high-level green wave search module.
This results in the completion of the given route with small fluctuation as illustrated in Fig.\ref{fig:energyconsumption_test}. (a).
In contrast, other algorithms show larger fluctuation as the controlled vehicles encounter red light.
Though the algorithm \cite{bae2019VIL} ensures timely crossing of traffic lights during the planning phase, the tracking controller fails to follow the planned speed reference due to tracking error, time latency, and difference in objective.
\begin{figure}[ht]
\begin{center}
\includegraphics[width=1\linewidth,keepaspectratio]{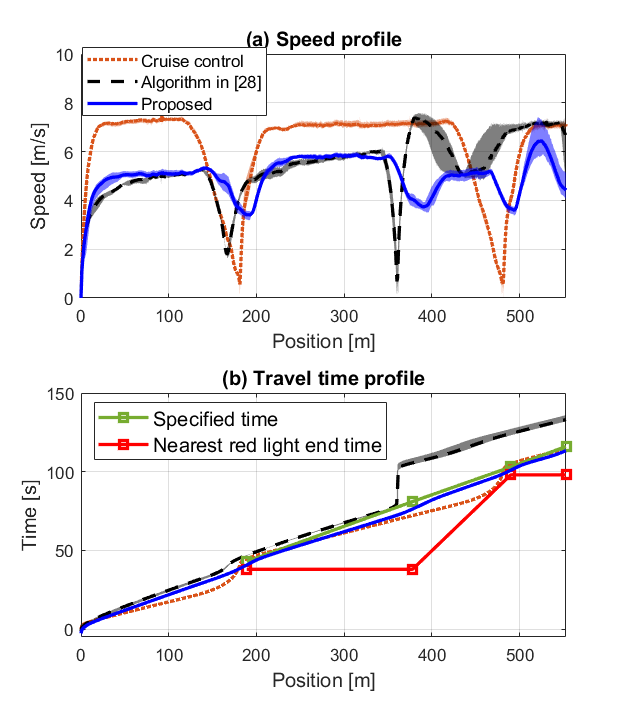}
\caption{Comparison: The proposed algorithm and the algorithm \cite{bae2019VIL}}
\label{fig:energyconsumption_test}
\end{center}
\end{figure}\\

We further validated our regressed model in \eqref{eq: energy cost regression} with the test data. For each trial, we calculated the error in total energy consumption between the measurements and the regression model \eqref{eq: energy cost regression} as a percentage. The results are presented in Table \ref{table: energy eval2}. Note that the value of the worst case error ($6.3 \%$) is lower than the minimum energy saving in Table \ref{table: energy eval}.
\begin{table}[h]
\centering
\caption{Error of the total energy consumption (\%)}
\label{table: energy eval2}
\begin{tabular}{ ||l||c||c||} 
 \hline
 Mean[\%]&Standard deviation[\%]&Worst case[\%]\\
 \hline
 $-1.06\%$&$3.41 \%$&$-6.3\%$\\
 \hline
\end{tabular}
\end{table}


\section*{ACKNOWLEDGMENT}
This research work presented herein is funded
by the Advanced Research Projects Agency-Energy 
(ARPA-E), U.S. Department of Energy under DE-AR0000791. 
\bibliography{IEEEabrv,reference.bib}
\section*{Appendix}
\subsection{Proof of Theorem \ref{thm1}}
All data points in $\mathcal{D}$ satisfy the constraints \eqref{eq:constraints} by construction. We prove the rest of the claim by induction

    For $i=0$, by definition of the robust controllable set \cite[Def.10.18]{borrelli2017predictive}, $\mathcal{R}_0 =\mathcal{T}$ is a $0$-step robust controllable set of the system \eqref{eq: observer equation} perturbed by the noise \eqref{eq: lumped noise} for the target convex set $\mathcal{T}$ subject to the constraints \eqref{eq:constraints}.
    
    Suppose that for some $i \geq 0$, $\mathcal{R}_i$, declared in line 9 at iteration $i$, is an $i$-step robust controllable set of the system \eqref{eq: observer equation} perturbed by the noise \eqref{eq: lumped noise} for the target convex set $\mathcal{T}$ subject to the constraints \eqref{eq:constraints}.
    In line 6, we find data points such that the state $\mathbf{x}_j$ of the system \eqref{eq: observer equation} can be robustly steered to $\mathcal{R}_i$. 
    Let $\{(\mathbf{\Tilde{x}}_s, ~\Tilde{u}_s)\}_{s=0}^{N_\text{s}}$ denote a set of the data points that satisfy the condition in line 6. 
    By definitions \cite[Def.10.15 \& 18]{borrelli2017predictive}, $\{\mathbf{\Tilde{x}}_s\}_{s=0}^{N_\text{s}}$ are elements of a $i+1$-step robust controllable set. 
    
    Now, we prove convex combinations of $\{\mathbf{\Tilde{x}}_s\}_{s=0}^{N_\text{s}}$ are also elements of the $i+1$-step robust controllable set.
    First, we show that convex combinations of data points $\{(\mathbf{\Tilde{x}}_s, ~\Tilde{u}_s)\}_{s=0}^{N_\text{s}}$ satisfy the constraints \eqref{eq:constraints}. This is true because $\mathcal{X}$ and $\mathcal{U}$ are convex sets.
    Second, we prove that convex combinations of $\{\mathbf{\Tilde{x}}_s\}_{s=0}^{N_\text{s}}$ can be robustly steered to $\mathcal{R}_i$.
    $\mathcal{R}_i$ is convex as it is constructed by the convex hull operation in line 9 at iteration $i$. The following holds for all $0 \leq \lambda \leq 1$:
    \begin{equation*}
        \begin{aligned}
            & ~~~~~~~~\forall s_1, s_2 \in \{0,\cdots,N_\text{s}\},\\
            & ~~~~~~~~\mathbf{A}\mathbf{\Tilde{x}}_{s_1} + \mathbf{B}\Tilde{u}_{s_1} + \mathbf{F}n_{s_1} \in \mathcal{R}_i, \forall n_{s_1} \in 2L\mathcal{W},\\
            & ~~~~~~~~\mathbf{A}\mathbf{\Tilde{x}}_{s_2} + \mathbf{B}\Tilde{u}_{s_2} + \mathbf{F}n_{s_2} \in \mathcal{R}_i, \forall n_{s_2} \in 2L\mathcal{W}.\\
            & \implies \forall s_1, s_2 \in \{0,\cdots,N_\text{s}\}, \forall n_{s_1}, n_{s_2}  \in 2L\mathcal{W}, \\
            & ~~~~~~~~ \mathbf{\Tilde{x}}_c \coloneqq \lambda\mathbf{\Tilde{x}}_{s_1} + (1-\lambda)\mathbf{\Tilde{x}}_{s_2},\\
            & ~~~~~~~~ \mathbf{\Tilde{u}}_c \coloneqq \lambda\mathbf{\Tilde{u}}_{s_1} + (1-\lambda)\mathbf{\Tilde{u}}_{s_2}, \\
            & ~~~~~~~~ n_{c} \coloneqq \lambda n_{s_1} + (1-\lambda)n_{s_2}, \\
            & ~~~~~~~~ \mathbf{A}\mathbf{\Tilde{x}}_c + \mathbf{B}\mathbf{\Tilde{u}}_c + \mathbf{F}n_{c} \in \mathcal{R}_i ~(\because \mathcal{R}_i \text{ is convex.}).
        \end{aligned}
    \end{equation*}
    Moreover, as $\mathcal{W}$ is a convex set, $n_{c}$ belongs to the set $2L\mathcal{W}$ and can represent all realizations of noise in $2L\mathcal{W}$.
    Thus, all convex combinations of $\{\mathbf{\Tilde{x}}_s\}_{s=0}^{N_\text{s}}$ are also elements of the $i+1$-step robust controllable set. Therefore, $\mathcal{R}_{i+1}$ is an $i+1$-step robust controllable set when $\mathcal{R}_i$ is an $i$-step robust controllable set.
    By induction, the claim is proved.

\subsection{Constraint reformulation}
First, we consider the state and input constraints \eqref{eq:constraints}. These constraints do not depend on the noisy position. Moreover, the speed of the nominal state propagated from $\bar{\mathbf{x}}_{0|k}=\hat{\mathbf{x}}_k$ is identical to that of the estimated state. Thus, we have that:
\begin{equation*}
\begin{aligned}
    & \bar{\mathbf{x}}_{i|k} \in \mathcal{X}, ~ u_{i|k} \in \mathcal{U} \implies \hat{\mathbf{x}}_{i|k} \in \mathcal{X}, ~ u_k \in \mathcal{U}.
\end{aligned}
\end{equation*}

Second, we reformulate the terminal constraints in \eqref{eq:mpc ori}.
From the dynamics in \eqref{eq:mpc ori}, the nominal dynamics in \eqref{eq:mpc reform}, and \eqref{eq: lumped noise}, we have that:
\begin{equation*}
\begin{aligned}
    & \mathbf{e}_{i|k} = \hat{\mathbf{x}}_{i|k} - \bar{\mathbf{x}}_{i|k}, \\
    & \mathbf{e}_{i+1|k} = \mathbf{A}\mathbf{e}_{i|k} + \mathbf{F}n_i, ~n_i \sim q(n).
\end{aligned}
\end{equation*}
From the initial constraint in \eqref{eq:mpc reform}, $\mathbf{e}_{0|k} = 0$.
Moreover, $\mathbf{A}^i \mathbf{F} = \mathbf{F}$.
Thus, we have that:
\begin{equation} \label{eq: error nominal estimate}
\begin{aligned}
    & \mathbf{e}_{N|k} = \sum_{i=0}^{N-1} \mathbf{A}^i \mathbf{F}n_i = \mathbf{F} \sum_{i=0}^{N-1}n_i.
\end{aligned}
\end{equation}
From \eqref{eq: lumped noise}, we have that $\mathbf{e}_{N|k} \in 2LN\mathbf{F} \mathcal{W}$ which implies:
\begin{equation*}
\begin{aligned}
    & \hat{\mathbf{x}}_{N|k} \in  \bar{\mathbf{x}}_{N|k} \oplus 2LN\mathbf{F}\mathcal{W}, ~\forall n_k \in 2L\mathcal{W}.
\end{aligned}
\end{equation*}
Then, we have the following:
\begin{equation*}
\begin{aligned}
    & \hat{\mathbf{x}}_{N|k} + \mathbf{F}\mathcal{W} \in \bar{\mathbf{x}}_{N|k} \oplus 2LN\mathbf{F}\mathcal{W} \oplus \mathbf{F} \mathcal{W}, ~\forall n_k \in 2L\mathcal{W}.
\end{aligned}
\end{equation*}
Thus, the terminal constraint on the nominal state in \eqref{eq:mpc reform} is a necessary condition to the terminal constraint in \eqref{eq:mpc ori}.

Third, we reformulate the red light constraint in \eqref{eq:mpc ori}.
Similar to the terminal constraint reformulation, we have that:
\begin{equation}
\begin{aligned}
    & \hat{\mathbf{x}}_{1|k} \oplus \mathbf{F}\mathcal{W} \in  \bar{\mathbf{x}}_{1|k} \oplus 2L\mathbf{F}\mathcal{W} \oplus \mathbf{F}\mathcal{W}  \subseteq \mathcal{T}_s.
\end{aligned}
\end{equation}
Thus, the red light constraint on the nominal state in \eqref{eq:mpc reform} is a necessary condition to the red light constraint in \eqref{eq:mpc ori}.
\subsection{Sample Mean Approximation}
From \eqref{eq: error nominal estimate}, the expectation of the terminal cost in \eqref{eq:mpc ori} can be reformulated as follows:
\begin{equation*}
\begin{aligned}
    & \mathbb{E}_{w_{0:N-1}}[V(\hat{\mathbf{x}}_{N|k})] = \mathbb{E}_{n_{0:N-1}}\bigg[V\bigg(\bar{\mathbf{x}}_{N|k} + \mathbf{F} \sum_{i=0}^{N-1}n_i\bigg)\bigg].
\end{aligned}
\end{equation*}
Considering that $sn_{1:M}$ denotes $M$ samples of noise derived from the random variable $\sum_{i=0}^{N-1}n_i$, where $n_i$ follows a distribution $q(w)$, we approximate the expectation with its sample mean as follows:
\begin{equation*}
\begin{aligned}
    & ~~ \mathbb{E}_{n_{0:N-1}}\bigg[V\bigg(\bar{\mathbf{x}}_{N|k} + \mathbf{F} \sum_{i=0}^{N-1}n_i\bigg)\bigg]  \\
    & \simeq \frac{1}{M}\sum_{m=1}^{M} V(\bar{\mathbf{x}}_{N|k} + \mathbf{F}sn_m).
\end{aligned}
\end{equation*}
\end{document}